\long\def\ca#1\cb{} 
\newcommand{\ket}[1]{|#1\rangle}               
\newcommand{\bra}[1]{\langle #1|}              
\newcommand{\dya}[1]{\ket{#1}\!\bra{#1}}
\newcommand{\dyad}[2]{\ket{#1}\!\bra{#2}}        
\newcommand{\ip}[2]{\langle #1|#2\rangle}      
\newcommand{\Cbb}{\mathbb{C}}
\newcommand{\Mbb}{\mathbb{M}}
\newcommand{\Wbb}{\mathbb{W}}
\newcommand{\Xbb}{\mathbb{X}}
\newcommand{\Ybb}{\mathbb{Y}}
\newcommand{\Zbb}{\mathbb{Z}}
\newcommand{\DC}{\mathcal{D}}
\newcommand{\EC}{\mathcal{E}}
\newcommand{\HC}{\mathcal{H}}
\newcommand{\MC}{\mathcal{M}}
\newcommand{\VC}{\mathcal{V}}
\newcommand{\Tr}{{\rm Tr}}
\renewcommand{\geq}{\geqslant}
\renewcommand{\leq}{\leqslant}
\newcommand{\mted}[3]{\langle#1|#2|#3\rangle }
\newcommand{\register}{Y}
\newcommand{\outcome}{y}
\newcommand{\VCt}{\widetilde{\mathcal{V}}}
\newcommand{\phiv}{\vec{\phi}}
\newcommand{\ot}{\otimes}
\newcommand{\ad}{^\dagger}
\newcommand*{\id}{\openone}
\newcommand*{\guess}{\text{guess}}
\newcommand{\rhoh}{\widehat{\rho}}
\newcommand{\rhot}{\tilde{\rho}}
\newcommand{\enh}{\text{enh}}
\newcommand{\secr}{\text{secr} }
\newcommand{\sys}{S}
\newcommand{\En}{E}
\newcommand{\eno}{E_1}
\newcommand{\ent}{E_2}
\newcommand{\fc}{\textsf{FC}}
\newtheoremstyle{example}{\topsep}{\topsep}%
{}
{}
{\bfseries}
{:}
{   }
{\thmname{#1}\thmnumber{ #2}}
\theoremstyle{example}
\newtheorem{example}{Example}
\newtheorem{theorem}{Theorem}
\newtheorem{lemma}[theorem]{Lemma}
\theoremstyle{definition}
\begin{document}

\title{Entropic framework for wave-particle duality in multipath interferometers}

\author{Patrick J. Coles}
\affiliation{Institute for Quantum Computing and Department of Physics and Astronomy, University of Waterloo, N2L3G1 Waterloo, Ontario, Canada}

\begin{abstract}
An interferometer - no matter how clever the design - cannot reveal both the wave and particle behavior of a quantum system. This fundamental idea has been captured by inequalities, so-called wave-particle duality relations (WPDRs), that upper bound the sum of the fringe visibility (wave behavior) and path distinguishability (particle behavior). Another fundamental idea is Heisenberg's uncertainty principle, stating that some pairs of observables cannot be known simultaneously. Recent work has unified these two principles for two-path interferometers. Here we extend this unification to $n$-path interferometers, showing that WPDRs correspond to a modern formulation of the uncertainty principle stated in terms of entropies. Furthermore, our unification provides a framework for solving an outstanding problem of how to formulate universally valid WPDRs for interferometers with more than two paths, and we employ this framework to derive some novel WPDRs.
\end{abstract}

\maketitle

\section{Introduction}\label{sec_intro}

Photons \cite{Schwindt1999}, electrons \cite{Bach2013}, neutrons \cite{Greenberger1988}, and even large organic molecules \cite{Arndt1999} have been shown experimentally to exhibit the behavior of both waves and particles. However, one cannot simultaneously see both behaviors; an apparatus that reveals the particle behavior cannot see the wave behavior and vice versa. This fundamental principle of quantum mechanics is known as wave-particle duality (WPD), and is closely related to Bohr's complementarity principle \cite{Scully1991, Bohr1928}.

Quantitative statements of WPD, so-called wave-particle duality relations (WPDRs), aim to upper bound the sum of the wave behavior and the particle behavior for a given interferometer. A well-known formulation given by Englert \cite{Englert1996} and Jaeger, Shimony, and Vaidman \cite{Jaeger1995} considered the two-path Mach-Zehnder interferometer (MZI) for single photons (see Fig.~\ref{fig1}, which shows more generally the $n$-path MZI). They quantified wave and particle behavior respectively by fringe visibility $\VC$ and path distinguishability $\DC$ (see below for definitions) and proved that
\begin{align}
\label{eqnVisDisTradeoff}
\VC^2+\DC^2 \leq 1.
\end{align}
This relation says that full wave-behavior ($\VC =1$) implies no particle behavior ($\DC =0$) and vice-versa, and also bounds the intermediate case of partial behavior. Many more complicated interferometry setups have been considered leading to a vast number of WPDRs \cite{Wootters1979a, Englert2000, Bosyk2013, Banaszek2013a, Li2012, Qureshi2013, Huang2013, Coles2014b, Vaccaro2012, AsadSiddiqui2015, Englert2007, Durr2001, Bera2015, Bagan2016, Qureshi2016}.

It has been debated \cite{Storey1994, Englert1995, Wiseman1995, Englert1996, Bjork1999, Durr2000, Busch2006a, Coles2014b} whether WPD is related to another fundamental principle - Heisenberg's uncertainty principle \cite{Heisenberg1927}. For example, the seminal paper Ref.~\cite{Englert1996} argued that the two principles are distinct. The uncertainty principle states that there exist pairs of observables, like position and momentum, that cannot be simultaneously known or jointly measured, and many quantitative statements of it have been formulated. Historically, these uncertainty relations employed the standard deviation as the uncertainty measure, but later they evolved into a more robust formulation in terms of entropies, i.e., entropic uncertainty relations (EURs). We refer the reader to \cite{Coles2015a} for a review of EURs and their applications to information-processing tasks such as cryptography.

Interestingly, several recent works \cite{Durr2000, Busch2006a, Coles2014b} demonstrated that the two principles are connected. Refs.~\cite{Durr2000, Busch2006a} connected \eqref{eqnVisDisTradeoff} to the standard-deviation-based uncertainty relation, while Ref.~\cite{Coles2014b} showed that \eqref{eqnVisDisTradeoff} is actually an EUR.  Hence, these works effectively unified two fundamental principles in quantum mechanics. 

However, a skeptic could argue that two-path interferometers are very special, and perhaps the equivalence between the two principles does not extend to arbitrary interferometers. This motivates our current work, where we unify the two principles for general $n$-path interferometers. In particular, we extend the result from Ref.~\cite{Coles2014b}, which found that \eqref{eqnVisDisTradeoff} is essentially the uncertainty relation for the min- and max-entropies, i.e., the relation used to prove the security of quantum key distribution~\cite{Tomamichel2012a}.

Furthermore, we exploit our aforementioned unification to provide a general framework for WPD in $n$-path interferometers. Such a framework has been lacking from the literature. On one hand, D\"urr \cite{Durr2001} proved a WDPR for the $n$-path MZI involving an operational measure of wave behavior, but at the cost of using a less operational measure of particle behavior.  On the other hand, Refs.~\cite{Bera2015, Bagan2016, Qureshi2016} recently proved WPDRs for this same scenario using operational measures of particle behavior, but at the cost of replacing visibility by coherence - a less operational measure of wave behavior. Here, we remedy this situation with measures of particle and wave behavior that are operational, experimentally friendly, and intuitive.

\begin{figure}[tbp]
\begin{center}
\includegraphics[width=3.4in]{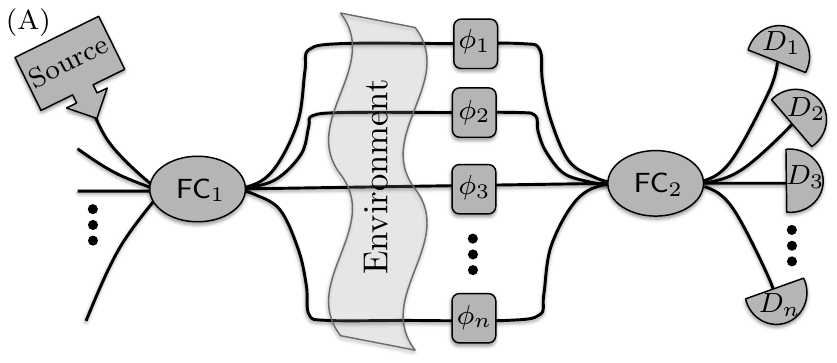}
\includegraphics[width=3.35in]{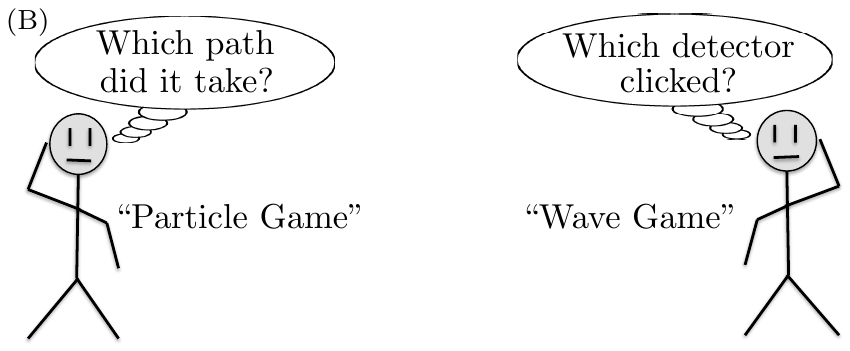}
\caption{(A) $n$-path interferometer for single photons. A source injects a photon into an optical fiber. The photon approaches a fiber coupler $\fc_1$, which allows the photon to leak into the other $n-1$ fibers, creating a superposition of which-path states. Then the photon interacts with some environment $E$, which may obtain some information, e.g., about the photon's path. Then a phase shift is applied to each arm ($\phi_z$ to the $z$-th arm), and the arms are brought together again at a second fiber coupler $\fc_2$. Finally the photon is detected at some detector. (B) Guessing game view of WPD. We think of ``particles'' as having a well-defined location, so the particle game asks the experimenter to guess (given access to a subsystem $E_1$ of $E$) which path the photon takes inside the interferometer. We think of ``waves'' as exhibiting interference - with the most extreme interference pattern corresponding to only one detector clicking all the time, while no interference corresponds to a uniform distribution over all detectors. Hence the wave game asks the experimenter to guess (given access to a subsystem $E_2$ of $E$, and given the freedom to adjust the phases $\{\phi_z\}$) which detector clicks. WPDRs give quantitative tradeoffs for the probabilities of winning these two games. }
\label{fig1}
\end{center}
\end{figure}

Our main conceptual results are as follows. We find a novel generalization of \eqref{eqnVisDisTradeoff} that extends the tradeoff between wave and particle behavior to $n$-path interferometers. Likewise we extend other WPDRs from the literature to the $n$-path case, namely, one treating quantum erasure~\cite{Englert2000} and one treating asymmetric input and output beam splitters \cite{Li2012}. The key point is that we derive these novel WPDRs directly from the uncertainty relation for the min- and max-entropies, which we henceforth abbreviate as MMEUR.  We argue that our WPDRs and many others in the literature are special cases of a single, generic WPDR [Eq.~\eqref{eqnMainResult} below] that is explicitly an EUR. In this sense we unify WPDRs with EURs.

The above conceptual insights come from some abstract, technical results that may be useful in other contexts. Namely we find a new connection between the max-entropy and the guessing probability, which leads to a new uncertainty relation for the guessing probability. We first present these technical results, and then we will move on to a discussion of wave-particle duality in Sec.~\ref{sec_general}. Finally, we compare of our approach to the literature in Sec.~\ref{sec_literature}.

\section{Abstract results}\label{sec_abstract}

The min- and max-entropies for a classical-quantum state $\rho_{XB} = \sum_x p_x \dya{x}\ot \rho^x_B$ are defined by \cite{Konig2009}\footnote{All logarithms are base 2.}
\begin{align}
\label{eqnminentropy}
H_{\min}(X|B)&:=-\log p_{\guess}(X|B)\\
\label{eqnmaxentropy}
H_{\max}(X|B)&:=\log p_{\secr}(X|B)\,.
\end{align}
Here,
\begin{align}
p_{\guess}(X|B) := \max_{\{\Mbb_x\}} \sum_x p_x \Tr (\Mbb_x \rho^x_B)
\end{align}
is the probability of guessing $X$ correctly given the outcome of the optimal POVM (positive operator valued measure) measurement $\{\Mbb_x\}$ on system $B$, and 
\begin{align}
\label{eqnpsecr}
p_{\secr}(X|B) := \max_{ \sigma_B } \textsf{F}(\rho_{XB}, \id \ot \sigma_B)
\end{align}
quantifies the secrecy of $X$ from $B$, as measured by the fidelity $\textsf{F}$ of $\rho_{XB}$ to an uncorrelated state, where the maximization in \eqref{eqnpsecr} is over all density operators on $B$.  Here, the fidelity is $\textsf{F}(\rho,\sigma):= (\Tr \sqrt{\sqrt{\rho}\sigma \sqrt{\rho}})^2$.

The min-entropy is employed in quantum key distribution to quantify how well the eavesdropper can guess the secret key. Interestingly we find that the \textit{max}-entropy is also connected to the guessing probability, with $d = |X|$,
\begin{align}
\label{eqnguessmaxentropy}
&H_{\max}(X|B) \notag\\
&  \leq \log \left(1+\sqrt{(d-1)^2 - (d \hspace{2pt}p_{\guess}(X|B)-1)^2   }\right)\,,
\end{align}
where the proof is in Appendix~\ref{app1}.

For any tripartite state $\rho_{AB_1B_2}$, and for two orthonormal bases $\Xbb=\{\dya{\Xbb_x}\}$ and $\Ybb=\{\dya{\Ybb_y}\}$ on $\HC_A$, whose measurement gives rise to random variables $X$ and $Y$, the MMEUR \cite{Tomamichel2011d} states that
\begin{align}
\label{eqnMMEUR}
H_{\min}(X|B_1)+H_{\max}(Y| B_2)\geq \log (1/c)
\end{align}
where $c:=\max_{x,y} |\ip{\Xbb_x}{\Ybb_y}|^2$. Inserting \eqref{eqnminentropy} and \eqref{eqnguessmaxentropy} into \eqref{eqnMMEUR} gives a novel uncertainty relation for $p_{\guess}$,
\begin{align}
\label{eqnpguessur1}
\frac{p_{\guess}(X|B_1)}{c}-\sqrt{(d-1)^2 - (d \hspace{2pt}p_{\guess}(Y|B_2)-1)^2   }\leq 1\,,
\end{align}
with $d=|X| = |Y|=\dim(\HC_A)$. In the extreme case of mutually unbiased bases (MUBs) we have $c = 1/d$, and \eqref{eqnpguessur1} simplifies to a symmetric looking relation
\begin{align}
\label{eqnpguessur2}
\DC(X|B_1)^2 + \DC(Y| B_2)^2 \leq 1 \,,
\end{align}
where $\DC(X|B):= (d \hspace{2 pt}p_{\guess}(X|B) - 1 ) / (d-1)$ is a measure of distinguishability with a range between 0 and 1. This concludes our abstract results, and we now move on to discuss WPD.

\section{Generic WPDR}\label{sec_general}

\subsection{$n$-path interferometer}\label{sec_npath}

Henceforth we consider an $n$-path interferometer. For example, Fig.~\ref{fig1} shows a source injecting a photon into a fiber optic, which then approaches a fiber coupler $\textsf{FC}_1$, creating a superposition of which-path states inside the interferometer.  The photon then interacts with an environment $E = E_1 E_2$, which (for generality) we allow to be a bipartite system composed of subsystems $E_1$ and $E_2$. Each path receives a phase shift $\phi_{z}$, with $\phiv:= \{\phi_z\}$ denoting the set of all phases. Finally, the paths are recombined with a fiber coupler $\textsf{FC}_2$, and the photon is detected at a detector.

\subsection{Guessing-game view}\label{sec_guess}

We argue that the MMEUR provides a robust, operational framework for discussing WPD, particularly due to the above connection with the guessing probability. Indeed one can think of WPD operationally as a statement that one cannot build an interferometer that allows one to win two complementary guessing games. Namely, as shown in Fig.~\ref{fig1}(B), we consider a ``particle game'' and a ``wave game''. The particle game asks the experimenter to guess which path the photon takes inside the interferometer, and the experimenter is given access to subsystem $E_1$ to help win the particle game. The wave game asks the experimenter to guess which detector will click at the interferometer output. To help win the wave game, the experimenter is given access to $E_2$ and furthermore is allowed to vary the phases $\phiv$. WPDRs are essentially quantitative restrictions on one's ability to win both the particle and wave guessing games.

To treat this general situation we state a generic WPDR. We quantify (lack of) particle and wave behavior by
\begin{align}
&\text{lack of particle behavior:} \hspace{4pt}H_{\min}(Z | \eno )\notag\\
&\text{lack of wave behavior:}\hspace{4pt} \min_{\Wbb \in \MC_{\Zbb}} H_{\max}(W | \ent ) \notag
\end{align}
where $Z$ is the which-path random variable, which we associate with the standard basis $\Zbb = \{\dya{z}\}$ of an $n$-dimensional Hilbert space $\HC_S$, $W$ is the random variable associated with basis $\Wbb$ of $\HC_S$, and $\MC_{\Zbb}$ is the set of all orthonormal bases that are mutually unbiased to $\Zbb$. We formulate our generic WPDR as
\begin{equation}
\label{eqnMainResult}
H_{\min}(Z | \eno )+ \min_{\Wbb\in \MC_{\Zbb}}H_{\max}(W | \ent ) \geq \log n
\end{equation}
which states that, for an $n$-path interferometer for single quantons (i.e., quantum particles such as photons), the sum of the ignorances about the particle and wave behaviors is at least $\log n$ bits of information. Of course,~\eqref{eqnMainResult} is explicitly an EUR, a special case of \eqref{eqnMMEUR}. But by applying the above argument used to derive \eqref{eqnpguessur2}, Eq.~\eqref{eqnMainResult} can be rearranged into the standard form for WPDRs, involving an upper bound on the sum of the squares of particle and wave terms. Furthermore, we now proceed to show that \eqref{eqnMainResult} encompasses several WPDRs in the literature and leads to novel WPDRs.

\section{Two paths}\label{sec_twopath}

First let us consider Eq.~\eqref{eqnVisDisTradeoff} for the two-path MZI, a special case of Fig.~\ref{fig1} where there is only a single relative phase $\phi$ applied between the two arms of the interferometer, i.e., $\phiv = \{0, \phi\}$. The path distinguishability quantifies how well one can guess which path the photon takes, given the outcome of the optimal measurement on the environment $E$ (e.g., $E$ could be the photon's polarization), and is defined by
\begin{align}
\label{eqnDistDef1}
\DC = 2 \hspace{1pt}p_{\guess}(Z|E)-1\,.
\end{align}
The fringe visibility $\VC$ quantifies oscillations, as one varies the phase $\phi$, in the probability to detect the photon at a given detector. Let $C\in \{1,2\}$ denote the random variable referring to which detector clicks at the interferometer output. Then $\VC$ is defined by
\begin{align}
\label{eqnVisDef1}
\VC &= \frac{p^{\max}_{C=1}-p^{\min}_{C=1}}{p^{\max}_{C=1}+p^{\min}_{C=1}}\,,
\end{align}
where $p^{\max}_{C=1} = \max_{\phiv}(p_{C=1})$, $p^{\min}_{C=1} = \min_{\phiv}(p_{C=1})$, and $p_{C=1}$ is the probability that $C=1$, i.e., that detector $D_1$ clicks. Ref.~\cite{Coles2014b} showed that $\DC$ and $\VC$ are respectively connected to the min- and max-entropies as follows
\begin{align}
\label{eqnMinMaxRelations1}
H_{\min}(Z|E) &= - \log \frac{1+\DC}{2}\\
\label{eqnMinMaxRelations2}
\min_{\Wbb \in \MC_{\Zbb}} H_{\max}(W) &= \log \Big(1+\sqrt{1-\VC^2}\Big) \,.
\end{align}
One can plug these identities into the EUR
\begin{align}
\label{eqnMinMaxEUR1}
H_{\min}(Z|E) + \min_{\Wbb \in \MC_{\Zbb}} H_{\max}(W) \geq 1
\end{align}
to show that \eqref{eqnVisDisTradeoff} is equivalent to \eqref{eqnMinMaxEUR1}. Note that \eqref{eqnMinMaxEUR1} is a special case of \eqref{eqnMainResult}, corresponding to $\eno = E$, $n=2$, and $\ent$ being a trivial system.

\section{Our WPDR for $n$ paths}\label{sec_novel}

To extend the above connection to the $n$-path MZI, we must address the question of how to generalize $\DC$ and $\VC$.  The seminal paper by Jaeger et al.~\cite{Jaeger1995} proposed that the appropriate generalization of $\DC$ to $n$ paths is:
\begin{align}
\label{eqnGeneralDistDef}
\DC = \frac{n \hspace{2pt}p_{\guess}(Z|E)-1}{n-1},
\end{align}
which reduces to \eqref{eqnDistDef1} for $n=2$.

The more difficult task is to generalize $\VC$. Naively, one might just try to directly use the formula in \eqref{eqnVisDef1}, in which case we denote the quantity as $\VCt$. However this approach fails.  As the following example illustrates, $\VCt$ does not satisfy a strong trade-off with $\DC$, for large $n$. 

\begin{example}
Consider the $n$-path MZI in Fig.~\ref{fig1}. Suppose $\fc_2$ induces the unitary transformation $F\ad$, where $F = \sum_{z,z'}(\omega^{-zz'}/\sqrt{n}) \dyad{z}{z'}$ is the Fourier matrix with $\omega = e^{2\pi i / n}$. Suppose the photon's state inside the interferometer is
\begin{align}
\label{eqnex1}\ket{\psi} &= \sum_{x\neq n}  \frac{1}{\sqrt{n-1}}\ket{\Xbb_x}\\
\label{eqnex2}&=\sqrt{\frac{n-1}{n}} \bigg( \ket{n}-\frac{1}{n-1}\sum_{z\neq n} \ket{z}\bigg)
\end{align}
where $\ket{\Xbb_x} = F \ket{x}$. From \eqref{eqnex2}, one can see that $p_{\guess}(Z) = (n-1)/n$ and hence $\DC = (n-2)/(n-1)$. So $\DC \to 1$ in the limit of large $n$. Also, $\VCt = 1$, since $p^{\max}_{C=1} > 0$ from \eqref{eqnex1}, and $p^{\min}_{C=1} = 0$ since choosing $\phi_z = \omega^z$ sets $p_{C=1} = 0$. Hence, for large $n$, one cannot formulate a non-trivial trade-off relation between $\DC$ and $\VCt$. 
\end{example}

While directly using \eqref{eqnVisDef1} fails, we seek to rewrite \eqref{eqnVisDef1} in a form that naturally generalize to $n$-paths. Assuming $\fc_2$ is symmetric (i.e., a photon with a well-defined path inside the interferometer has an equal chance to go into each of the output modes), note that for $n=2$
\begin{align}
\label{eqnvisalternative1}
\VC =  2p_{\guess}^{\max}(C)-1\,,
\end{align}
where $p_{\guess}^{\max}(C):=\max_{\phiv}p_{\guess}(C)$. Hence, for arbitrary $n$, we propose the following formula 
\begin{align}
\label{eqnWaveBehav1}
 \VC &:=  \frac{n \hspace{2pt}p_{\guess}^{\max}(C)-1}{n-1} \,.
\end{align}
Note the similarity between \eqref{eqnWaveBehav1} and \eqref{eqnGeneralDistDef}. We emphasize that \eqref{eqnWaveBehav1} captures the intuitive notion of interference visibility, since $p_{\guess}^{\max}(C)$ quantifies the spatial contrast of intensity at the interferometer output.

With these definitions we state the following result.
\begin{theorem}
\label{thm111}
For the $n$-path MZI in Fig.~\ref{fig1}, where $\fc_1$ is arbitrary while $\fc_2$ is symmetric, the generalization of Eq.~\eqref{eqnVisDisTradeoff} holds:
\begin{align}
\label{eqnVisDisTradeoff22}
\VC^2+\DC^2 \leq 1\,,
\end{align}
where $\VC$ and $\DC$ are defined in \eqref{eqnWaveBehav1} and \eqref{eqnGeneralDistDef}, respectively. Furthermore, \eqref{eqnVisDisTradeoff22} can be seen as a special case of the entropic uncertainty relation in \eqref{eqnMainResult}. 
\end{theorem}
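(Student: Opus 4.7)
My plan is to derive \eqref{eqnVisDisTradeoff22} as a direct instance of the generic WPDR \eqref{eqnMainResult}, specialized by taking $E_1=E$ and $E_2$ trivial so that \eqref{eqnMainResult} collapses to
\[
H_{\min}(Z|E)+\min_{\Wbb\in\MC_{\Zbb}} H_{\max}(W)\geq \log n.
\]
The remaining work is to re-express the two entropic terms in the operational language of $\DC$ and $\VC$.

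For the min-entropy term, inverting \eqref{eqnGeneralDistDef} gives $p_{\guess}(Z|E)=(1+(n-1)\DC)/n$, and \eqref{eqnminentropy} then yields $H_{\min}(Z|E)=\log n-\log(1+(n-1)\DC)$. For the max-entropy term I want to exhibit a single element of $\MC_\Zbb$ that realizes the wave guessing probability $p_{\guess}^{\max}(C)$. The symmetry assumption on $\fc_2$ is precisely the condition $|\bra{c}\fc_2\ket{z}|^2=1/n$ for all $c,z$, and since a diagonal phase preserves moduli, the phase-shifted detector basis $\Wbb^{(\phiv)}:=\{D_{\phiv}^{\dagger}\fc_2^{\dagger}\ket{c}\}_{c}$ lies in $\MC_\Zbb$ for every $\phiv$. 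Letting $\phiv^*$ achieve $p_{\guess}^{\max}(C)$ and using $\Wbb^{(\phiv^*)}$ as a witness gives
\[
\min_{\Wbb\in\MC_\Zbb} H_{\max}(W) \leq H_{\max}(\Wbb^{(\phiv^*)}).
\]
Applying \eqref{eqnguessmaxentropy} to the classical detector distribution (trivial side information, guessing probability $p_{\guess}^{\max}(C)$) and using \eqref{eqnWaveBehav1} to rewrite $(n\,p_{\guess}^{\max}(C)-1)^2=(n-1)^2\VC^2$ then yields $\min_{\Wbb\in\MC_\Zbb} H_{\max}(W)\leq \log(1+(n-1)\sqrt{1-\VC^2})$.

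Substituting the identity for $H_{\min}(Z|E)$ into the specialized MMEUR rearranges it to $\min_{\Wbb\in\MC_\Zbb}H_{\max}(W)\geq \log(1+(n-1)\DC)$; chaining this with the upper bound from the previous step gives $\log(1+(n-1)\sqrt{1-\VC^2})\geq \log(1+(n-1)\DC)$; exponentiating, subtracting $1$, dividing by $n-1$, and squaring collapses this directly to $\VC^2+\DC^2\leq 1$. Since the derivation only invoked \eqref{eqnMainResult}, \eqref{eqnguessmaxentropy}, and the two operational identities above, it simultaneously establishes the ``furthermore'' clause of the theorem.

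I expect the hardest step to be the witness construction for the max-entropy bound, because it depends essentially on the symmetry hypothesis on $\fc_2$: without that hypothesis, the phase-shifted detector bases generally fall outside $\MC_\Zbb$, so one loses the license to use them as elements of the minimization, and the reduction breaks. Everything else is routine algebra and direct application of results already assembled in Sections \ref{sec_abstract} and \ref{sec_general}.
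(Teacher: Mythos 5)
Your proposal is correct and follows essentially the same route as the paper: specialize \eqref{eqnMainResult} with $E_1=E$ and $E_2$ trivial, use \eqref{eqnminentropy} with \eqref{eqnGeneralDistDef} for the min-entropy term, and bound the max-entropy term via \eqref{eqnguessmaxentropy} together with the identification $\max_{\Wbb\in\MC_\Zbb}p_{\guess}(W)=p_{\guess}^{\max}(C)$ (the paper's Eq.~\eqref{eqnpguessWandC}), which your phase-shifted detector-basis witness supplies. Your explicit argument that symmetry of $\fc_2$ places $\Wbb^{(\phiv)}$ in $\MC_\Zbb$ is exactly the detail the paper leaves implicit, and the one-sided inequality it yields is all the proof needs.
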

\begin{proof}
The proof notes that~\eqref{eqnMinMaxRelations1} generalizes to 
\begin{align}
\label{eqnMinMaxRelations1b}
H_{\min}(Z|E) = - \log \left(\frac{1+(n-1)\DC}{n} \right)\,,
\end{align}
while \eqref{eqnMinMaxRelations2} generalizes with inequality:
\begin{align}
\label{eqnMinMaxRelations2b}
\min_{\Wbb \in \MC_{\Zbb}} H_{\max}(W) & \leq \log \left(1+(n-1)\sqrt{1-\VC^2} \right) \,,
\end{align}
where \eqref{eqnMinMaxRelations2b} follows from \eqref{eqnguessmaxentropy}, noting that 
\begin{align}
\label{eqnpguessWandC}
\max_{\Wbb \in \MC_{\Zbb}} p_{\guess}(W) = p_{\guess}^{\max}(C).
\end{align}
Inserting \eqref{eqnMinMaxRelations1b} and \eqref{eqnMinMaxRelations2b} into \eqref{eqnMainResult}, while setting $\ent$ to be trivial and $\eno = E$, and rearranging gives \eqref{eqnVisDisTradeoff22}.\end{proof}

\section{Extensions of our WPDR}\label{sec_extensions}

\subsection{Asymmetric couplers}\label{sec_assymetric}

The restriction in Theorem~\ref{thm111} that $\fc_2$ is symmetric can be relaxed, and a relation of the form of \eqref{eqnVisDisTradeoff22} can be obtained for the case where both $\fc_1$ and $\fc_2$ are possibly asymmetric. However, the price to pay is that one needs slightly more complicated definitions of visibility and distinguishability, where one post-selects on a particular detection event. This point was first highlighted by Ref.~\cite{Li2012} for the 2-path MZI. Here we extend the WPDR in Ref.~\cite{Li2012} to the $n$-path MZI, proving the relation (see Appendix~\ref{app3} for the proof and further discussion)
\begin{align}
\label{eqnasymmetricwpdr1}
\VC_1^2+\DC_1^2 \leq 1
\end{align}
where we define
\begin{align}
\label{eqnasymmetricwpdr2}
\VC_1 := \frac{p_{C=1}^{\max} - p_{C=1}^{\text{dec}}}{(n-1)p_{C=1}^{\text{dec}}}\,,\quad
\DC_1 := \frac{n \hspace{2pt}p_{\guess}(Z|E,C=1)-1}{n-1}\,.
\end{align}
Here, $\DC_1$ is the path distinguishability conditioned on the event $C=1$. Also, $p_{C=1}^{\text{dec}}$ denotes the probability for detector $D_1$ to click when the system's density matrix has been diagonalized (i.e., decohered) in the which-path basis. Note that $\VC_1$ quantifies the degree to which adjusting the phases $\phiv$ can increase the probability $p_{C=1}$ to detect the photon at detector $D_1$, beyond the baseline value $p_{C=1}^{\text{dec}}$ associated with no coherence. So it quantifies the effect of the applied phases on the final detection probability. Interestingly, the formula for $\VC_1$ somewhat resembles the standard definition for visibility in \eqref{eqnVisDef1} and, in fact, reduces to \eqref{eqnVisDef1} for $n=2$.

\subsection{Quantum erasure}\label{sec_qerase}

Finally, we show that \eqref{eqnMainResult} also provides a natural framework for a scenario called quantum erasure (see, e.g., \cite{Schwindt1999} for an experimental implementation), which aims to enhance the visibility by erasing the which-path information stored in the environment. Ref.~\cite{Englert2000} presented a WPDR for quantum erasure in two-path interferometers. Here we generalize their relation to the $n$-path case.

In quantum erasure, after the system $S$ interacts with an environment $E$, the experimenter performs a POVM measurement $\Ybb = \{\Ybb_y\}$ on $E$. This gives rise to sub-ensembles associated with the different measurement outcomes $y$, and one can define the distinguishability and visibility for the $y$-th sub-ensemble, which we respectively denote as $\DC(\Ybb_y)$ and $\VC(\Ybb_y)$. Averaging over all $y$ gives
\begin{align}
\label{eqndefdyvy}
\DC(\Ybb):= \sum_y p_y \DC(\Ybb_y)\quad \text{and}\hspace{5pt}\VC(\Ybb):= \sum_y p_y \VC(\Ybb_y)
\end{align}
where $p_y$ is the probability of outcome $y$. Our result for quantum erasure is that, for the $n$-path MZI in Fig.~\ref{fig1}, it holds for any choice of $\Ybb$ that~(see Appendix~\ref{app2} for proof)
\begin{align}
\label{eqnqerase123}
\VC(\Ybb)^2+ \DC(\Ybb)^2 \leq 1.
\end{align}
Note that \eqref{eqnqerase123} generalizes and implies our WPDR in \eqref{eqnVisDisTradeoff22}. One can recover \eqref{eqnVisDisTradeoff22} from \eqref{eqnqerase123} by noting that $\VC \leq \VC(\Ybb)$, and by choosing $\Ybb$ to optimize the distinguishability, since $\max_{\Ybb} \DC(\Ybb) = \DC$. Alternatively, one can choose $\Ybb$ to maximize $\VC(\Ybb)$, in which case \eqref{eqnqerase123} provides a fundamental limit on the recoverable visibility in a quantum erasure experiment.

We show (see Appendix~\ref{app2}) that \eqref{eqnqerase123} follows from a relation of the form
\begin{equation}
\label{eqnQuEr111122}
H_{\min}(Z|Y)+\min_{\Wbb \in \MC_{\Zbb}}H_{\max}(W|Y) \geq \log n,
\end{equation}
where $Y$ is the classical register that stores the outcome of the $\Ybb$ measurement on $E$. Note that \eqref{eqnQuEr111122} is a special case of \eqref{eqnMainResult} where $E_1 = Y$ and $E_2 = Y' $, where $Y'$ is a copy $Y$ (which has the same information content as $Y$, and hence can be replaced by $Y$ in the entropy term).

\section{Discussion of literature}\label{sec_literature}

The first WPDR for $n$-path interferometers was given by D\"urr \cite{Durr2001}. D\"urr exploited the fact that the purity function $\Tr \rho^2$ can be broken down into two terms:
\begin{equation}
\label{eqndurr}
 \sum_z \rho_{zz}^2 +\sum_{z,z'\neq z} |\rho_{zz'}|^2 = \Tr \rho^2\,,
\end{equation}
where $\rho_{zz'}:=\mted{z}{\rho}{z'}$. Since $\Tr \rho^2\leq 1$, Eq.~\eqref{eqndurr} gives a WPDR where the first and second terms on the left-hand side are interpreted as measures of particle and wave behavior, respectively. (More precisely, D\"urr incorporated some dimension-dependent scaling factors into these terms for normalization purposes.)

Furthermore, D\"urr generalized the relation in \eqref{eqndurr} to the more interesting case where a which-path detector obtains some information inside the interferometer, i.e., the scenario considered in \eqref{eqnVisDisTradeoff}. However, D\"urr's generalized relation has been critiqued \cite{Bimonte2003, Bimonte2003a, Jakob2007} due to the fact that it is not saturated for all pure states. Nevertheless, we do not see this as a major issue. Consider that Maassen and Uffink's EUR \cite{Maassen1988} is not saturated by all pure states. Yet their relation is by far the most famous EUR, and it has inspired countless studies on the topic.

We believe that much more important issues are whether the employed quantitative measures are (1) operational / experimentally friendly and (2) capture one's intuition about wave and particle behavior. One could argue that D\"urr was on the right track, in this sense, with his measure of wave behavior, for which he gave an operational interpretation \cite{Durr2001}. However, D\"urr's measure of particle behavior is not as operational or intuitive as the one in~\eqref{eqnGeneralDistDef}, proposed by Ref.~\cite{Jaeger1995}.

Interestingly, some recent studies \cite{Bera2015,Bagan2016,Qureshi2016} seem to do the opposite of D\"urr. They employ operational measures of particle behavior, e.g., Ref~\cite{Bagan2016} employs the definition in \eqref{eqnGeneralDistDef}. But these references replace visibility with a more abstract quantity called coherence. While coherence is an interesting quantity for theorists, it remains to be clarified how it precisely relates to interferometry experiments.

The nice aspect of our approach is that both the wave and particle terms are operational - both involving the optimal guessing probability. The symmetric nature of the two terms seems quite natural. It leads to a simple guessing-game view of wave-particle duality, as shown in Fig.~\ref{fig1}(B). 

Furthermore, all of the previous works stop short of considering the case where both the input and output beam splitters are asymmetric. Indeed the relations derived in \cite{Durr2001,Bera2015,Bagan2016,Qureshi2016} do not apply to this situation. The fact that we can treat this situation is a testament to the robustness of our entropic-uncertainty framework.

We remark that, after completion of our work, Renes \cite{Renes2016} proved an uncertainty relation that is analogous to (but not exactly the same as) Eq.~\eqref{eqnpguessur2}. In some cases Renes's bound is tighter than \eqref{eqnpguessur2}, and in other cases the opposite is true. Interestingly, like us, Renes derived his relation directly from the MMEUR.

\section{Conclusion}\label{sec_conclusion}

In summary, we obtained three novel WPDRs for $n$-path interferometers, Eqs.~\eqref{eqnVisDisTradeoff22}, \eqref{eqnasymmetricwpdr1}, and \eqref{eqnqerase123}, each of which generalize the famous WPDR of Refs.~\cite{Englert1996,Jaeger1995}. All of these novel WPDRs follow directly from the MMEUR, combined with our new operational meaning for the max-entropy in \eqref{eqnguessmaxentropy}. In this sense, wave-particle duality is the entropic uncertainty principle in disguise, and the latter provides a robust framework for formulating the former. We emphasize that our generic WPDR in \eqref{eqnMainResult} can be applied to a variety of interferometric scenarios, and hence, when specialized, will lead to other novel WPDRs.

\section{Acknowledgements}\label{sec_ack}

The author is funded by Sandia National Laboratories, Office of Naval Research, Industry Canada, NSERC Discovery Grant, and Ontario Research Fund.

\bibliographystyle{naturemag}

\newpage

\onecolumngrid

\appendix

\section{Relation between max-entropy and guessing probability}\label{app1}

Here we prove \eqref{eqnguessmaxentropy}. First we state a technical lemma involving the $p$-norm, which is defined by
\begin{align}
\label{eqnpnorm}
\| \vec{x} \|_{p} :=  \bigg(\sum_j |x_j |^p \bigg)^{1/p}
\end{align}
for a real vector $\vec{x} = \{x_j\}$. In particular, we consider the special cases
\begin{align}
\label{eqnpnorm2}
\| \vec{x} \|_{\infty} &= \max_j |x_j | \,,\\
\| \vec{x} \|_{1/2} &= \bigg(\sum_j \sqrt{|x_j | } \bigg)^2\,.
\end{align}
We remark that $\| \vec{x} \|_{p}$ is technically a norm for $p\geq 1$, while this is not true for $0<p<1$.

\begin{lemma}
\label{lemma1}
For a discrete probability distribution $\vec{q} = \{q_j\}$ over a sample space of size $d$, it holds that
\begin{align}
\label{eqnnormlemma}
\big( \| \vec{q} \|_{1/2} - 1\big)^2 + \big(d \hspace{2pt} \| \vec{q} \|_{\infty} - 1 \big)^2 \leq \big(d-1 \big)^2\,,
\end{align}
where equality holds if $d=2$.
\end{lemma}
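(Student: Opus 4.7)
The plan is to reduce the inequality to a one-parameter problem by first fixing $\|\vec{q}\|_{\infty}$ and bounding $\|\vec{q}\|_{1/2}$ in terms of it, then dispatching the resulting single-variable inequality by an algebraic identity. Without loss of generality relabel so that $q_1 = \max_j q_j =: M$, and note that $M \geq 1/d$ because $M$ dominates the average. Write $\|\vec{q}\|_{1/2} = (\sqrt{M}+S)^2$ where $S = \sum_{j\ne 1}\sqrt{q_j}$.

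For the first reduction, I would observe that subject to $\sum_{j\ne 1} q_j = 1-M$ and $q_j \geq 0$, the concavity of $\sqrt{\cdot}$ (equivalently, Jensen's inequality or the power-mean inequality) gives
\begin{equation}
S \;\leq\; (d-1)\sqrt{\tfrac{1-M}{d-1}} \;=\; \sqrt{(d-1)(1-M)},
\end{equation}
with the maximizer $q_j = (1-M)/(d-1)$ being admissible (it satisfies $q_j \leq M$ precisely because $M \geq 1/d$). Hence it suffices to prove the stronger claim
\begin{equation}
\bigl((\sqrt{M}+\sqrt{(d-1)(1-M)}\,)^2 - 1\bigr)^2 + (dM-1)^2 \;\leq\; (d-1)^2.
\end{equation}

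The main technical step is showing this single-variable inequality. I would substitute $x = \sqrt{M}$, $y = \sqrt{1-M}$, so that $x^2+y^2=1$, and expand. A short computation shows
\begin{align}
(\sqrt{M}+\sqrt{(d-1)(1-M)}\,)^2 - 1 &= (d-2)y^2 + 2\sqrt{d-1}\,xy,\\
dM-1 &= (d-1)x^2 - y^2.
\end{align}
Squaring, adding, and subtracting $(d-1)^2 = (d-1)^2(x^2+y^2)^2$ should, after collecting terms, yield the clean factorization
\begin{equation}
\mathrm{LHS} - (d-1)^2 \;=\; -2(d-2)\,y^2\bigl(y - \sqrt{d-1}\,x\bigr)^2,
\end{equation}
which is $\leq 0$ for all $d\geq 2$. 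This identity is where I expect the only real work to lie — the expansion is short but needs care to see that everything collapses into a perfect square. The equality case $d=2$ is then immediate since the prefactor $(d-2)$ vanishes, recovering the familiar two-path identity $(2xy)^2 + (x^2-y^2)^2 = (x^2+y^2)^2 = 1$.

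The main obstacle is really just recognizing the factorization; once it is in hand, the proof is essentially a calculation. A slightly slicker alternative I would keep in reserve is to parametrize $(\sqrt{M}, \sqrt{(d-1)(1-M)})/\sqrt{d-1}\cdot\sqrt{d-1}$ so that the pair $(x, y\sqrt{d-1})$ sits on a known curve, after which the two squared quantities can be interpreted geometrically as distances to a fixed point — but I expect the direct algebraic route outlined above to be shorter and less error-prone.
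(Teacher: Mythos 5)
Your proof is correct, and I verified the key factorization: with $x=\sqrt{M}$, $y=\sqrt{1-M}$, $k=d-1$, one indeed gets $A^2+B^2-k^2(x^2+y^2)^2=-2(k-1)y^2(y-\sqrt{k}\,x)^2$ where $A=(k-1)y^2+2\sqrt{k}\,xy$ and $B=kx^2-y^2$. The first reduction is the same as the paper's: you invoke Jensen to bound $\sum_{j\neq 1}\sqrt{q_j}$ by the value at the uniform distribution on the remaining $d-1$ outcomes, whereas the paper phrases the identical step as Schur-concavity of $\|\cdot\|_{1/2}$ under the majorization $\vec{q}\succ\vec{s}$ with $\vec{s}=\{q_1,\tfrac{1-q_1}{d-1},\dots\}$; both implicitly use $\|\vec{q}\|_{1/2}\geq 1$ so that enlarging the half-norm enlarges $(\|\vec{q}\|_{1/2}-1)^2$, which you should state in one line. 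Where you genuinely diverge is the single-variable inequality: the paper writes $f(q_1)=(1-q_1)(a(q_1)-b(q_1))$ and settles $a\geq b$ by the crude bounds $a\geq d(d-2)$ and $\max b=(d-2)^2+d-1$, yielding $a-b\geq d-3$, which forces a separate verification of the $d=2$ case. Your perfect-square factorization handles all $d\geq 2$ uniformly, makes the $d=2$ equality statement an immediate corollary of the vanishing prefactor $(d-2)$, and as a bonus exhibits the exact equality conditions for $d\geq 3$ (namely $M=1$ or $M=1/d$ together with equality in Jensen), which the paper's looser estimate obscures. Your version is the cleaner argument.
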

\begin{proof}
The function $\| \vec{q} \|_{1/2}$ is Schur-concave, i.e., it satisfies 
\begin{align}
\label{eqnnormlemma1}
\| \vec{q} \|_{1/2}\leq \| \vec{r} \|_{1/2},\quad\text{if  }\vec{q} \succ \vec{r} \,.
\end{align}
The majorization condition $\vec{q} \succ \vec{r}$ is defined by
\begin{align}
\label{eqnnormlemma2}
\sum_{j=1}^k q_j \geq \sum_{j=1}^k r_j \quad \text{for all }k \in [d]\,,
\end{align}
where the probabilities are assumed to be listed in descending order, i.e., $q_j \geq q_l$ for $j<l$ and likewise for $\vec{r}$.

Now let us choose
\begin{align}
\label{eqnnormlemma3}
\vec{s} := \left\{q_1, \frac{1-q_1}{d-1}, \frac{1-q_1}{d-1},... , \frac{1-q_1}{d-1} \right\}\,, 
\end{align} 
and note that this choice gives $\vec{q} \succ \vec{s}$. Hence we have
\begin{align}
\label{eqnnormlemma4}
\| \vec{q} \|_{1/2} \leq \| \vec{s} \|_{1/2}= \Big(\sqrt{q_1} + \sqrt{(d-1)(1- q_1)  }\Big)^2\,.
\end{align} 

Now consider the function
\begin{align}
\label{eqnnormlemma5}
f(q_1):= (d-1)^2  - (d q_1 - 1)^2 - (\| \vec{s} \|_{1/2} - 1)^2 \,.
\end{align} 
We wish to show that $f(q_1) \geq 0$. Note that, using \eqref{eqnnormlemma4} and the fact that $ \| \vec{q} \|_{\infty} = q_1$, the non-negativity of $f(q_1) $ would imply that the desired result \eqref{eqnnormlemma} is true.

The equality of \eqref{eqnnormlemma} for $d=2$ is easily verified, so in what follows we restrict to $d\geq 3$. To show $f(q_1) \geq 0$ we write
\begin{align}
\label{eqnnormlemma6}
f(q_1) = (1- q_1 )(a(q_1)-b(q_1)) \,,
\end{align} 
where
\begin{align}
\label{eqnnormlemma7}
a(q_1)&:=d(d-2) +d^2 q_1\\
 b(q_1)&:=\Big( (d-2)\sqrt{1-q_1} +\sqrt{q_1(d-1)}\Big)^2\,.
\end{align} 

We just need to verify that $a(q_1) \geq b(q_1)$. It is straightforward to show that 
\begin{align}
\label{eqnnormlemma8}
\max_{q_1 \in [0,1]} b(q_1) = (d-2)^2 + d-1\,,
\end{align} 
and clearly $a(q_1)\geq d(d-2)$, so we have
\begin{align}
\label{eqnnormlemma9}
a(q_1) -  b(q_1) \geq  d-3 \,,
\end{align} 
which proves $f(q_1)\geq 0$ and hence \eqref{eqnnormlemma}.
\end{proof}

We note that \eqref{eqnnormlemma} is equivalent to the following
\begin{align}
\label{eqnguessmaxentropynomem}
H_{\max}(X)   \leq \log \left(1+\sqrt{(d-1)^2 - (d \hspace{1pt}p_{\guess}(X)-1)^2   }\right)\,,
\end{align}
which is a special case of relation \eqref{eqnguessmaxentropy}, corresponding to a trivial $B$ system. Now we prove the general case, where $B$ is non-trivial. For convenience, we restate \eqref{eqnguessmaxentropy} in the following lemma.

\begin{lemma}
\label{lemma2}
For a classical-quantum state $\rho_{XB} = \sum_x p_x \dya{x}\ot \rho_B^x$ where $d = |X|$, it holds that
\begin{align}
\label{eqnguessmaxentropyapp}
H_{\max}(X|B)  \leq \log \left(1+\sqrt{(d-1)^2 - (d \hspace{2pt}p_{\guess}(X|B)-1)^2   }\right)\,.
\end{align}
\end{lemma}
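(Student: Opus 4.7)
The plan is to reduce the quantum side-information case (Lemma~\ref{lemma2}) to the unconditional classical bound (Lemma~\ref{lemma1}) by measuring the side information $B$ with the optimal guessing POVM, and then combining a pointwise application of Lemma~\ref{lemma1} with Jensen's inequality.

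Concretely, I would let $\{M_x\}_{x=1}^{d}$ attain $p_{\guess}(X|B)$. Measuring $B$ with this POVM produces a classical-classical state $\rho_{XX'}$ with joint distribution $p(x,x') = p_x\, \Tr(M_{x'}\rho_B^x)$. Data processing for the max-entropy immediately gives $H_{\max}(X|B)\le H_{\max}(X|X')$. Moreover, the simple strategy ``guess $X=X'$'' achieves $\sum_{x'} p(x',x') = p_{\guess}(X|B)$, while data processing for the min-entropy upper bounds $p_{\guess}(X|X')$ by $p_{\guess}(X|B)$, so the two guessing probabilities coincide. A direct computation of $F(\rho_{XX'}, \id\otimes \sigma_{X'})$ for diagonal $\sigma_{X'}$ followed by a Cauchy--Schwarz optimization over the diagonal entries yields the closed form
\begin{equation*}
2^{H_{\max}(X|X')} = \sum_{x'} p(x')\, \|\vec{p}(\cdot|x')\|_{1/2}.
\end{equation*}

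Next, I would apply the equivalent form \eqref{eqnguessmaxentropynomem} of Lemma~\ref{lemma1} to each conditional distribution $\vec{p}(\cdot|x')$ and average, obtaining
\begin{equation*}
2^{H_{\max}(X|X')} \le 1 + \sum_{x'} p(x')\,\sqrt{(d-1)^2 - \bigl(d\max_x p(x|x') - 1\bigr)^2}.
\end{equation*}
A second-derivative check shows that $t\mapsto \sqrt{(d-1)^2-(dt-1)^2}$ is a rescaled half-circle and hence concave on $[1/d,1]$; Jensen's inequality then pulls the average inside the square root, and since $\mathbb{E}_{x'}[\max_x p(x|x')] = p_{\guess}(X|X') = p_{\guess}(X|B)$, we recover the claimed bound \eqref{eqnguessmaxentropyapp} after taking logarithms.

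The main subtlety I anticipate is the tight equality $p_{\guess}(X|X')=p_{\guess}(X|B)$. It is crucial that this be an \emph{equality}, not just an inequality, because the bound $p\mapsto \log\bigl(1+\sqrt{(d-1)^2-(dp-1)^2}\bigr)$ is a \emph{decreasing} function of $p$ on $[1/d,1]$; a strict gap $p_{\guess}(X|X')<p_{\guess}(X|B)$ would leave us with a weaker result than stated. The sandwich argument given above (data processing in one direction, the ``guess $X=X'$'' strategy in the other) is exactly what closes this gap, and is also why one must measure with the \emph{optimal} guessing POVM rather than an arbitrary one.
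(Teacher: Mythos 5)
Your proposal is correct and follows essentially the same route as the paper's proof: measure $B$ with the optimal guessing POVM, invoke data processing for $H_{\max}$, use the classical-conditioning decomposition $2^{H_{\max}(X|X')}=\sum_{x'}p(x')\,\|\vec p(\cdot|x')\|_{1/2}$, apply the unconditional bound of Lemma~\ref{lemma1} to each conditional distribution, and average. The only cosmetic differences are that you perform a single Jensen step using concavity of the half-circle $t\mapsto\sqrt{(d-1)^2-(dt-1)^2}$ where the paper splits this into concavity of $\sqrt{\cdot}$ plus convexity of $(\cdot)^2$, and that you spell out the sandwich argument for $p_{\guess}(X|X')=p_{\guess}(X|B)$ which the paper asserts as a known property.
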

\begin{proof}
In what follows, we use some properties of the min- and max-entropies, and we refer the reader to Ref.~\cite{Tomamichel2012} for elaboration. For example, the data-processing inequality for the max-entropy implies that
\begin{align}
\label{eqnguessmaxentropyapp2}
2^{H_{\max}(X|B)} \leq 2^{H_{\max}(X|M)}
\end{align}
where $M$ is the classical register produced from measuring POVM $\Mbb = \{ \Mbb_m\}$ on system $B$. That is, the right-hand side of \eqref{eqnguessmaxentropyapp2} is evaluated for the state
\begin{align}
\label{eqnguessmaxentropyapp2b}
\rho_{XM}:= \sum_m \Tr_B \left(\rho_{XB} (\id \ot \Mbb_m)\right) \ot \dya{m} = \sum_{x,m} \Tr(p_x\rho_B^x \Mbb_m) \dya{x}\ot \dya{m}\,.
\end{align}

Suppose we choose $\Mbb$ such that it is the measurement that optimizes the guessing probability, i.e., 
\begin{align}
\label{eqnguessmaxentropyapp3}
p_{\guess}(X|B)&= \sum_x p_x \Tr (\Mbb_x \rho_B^x)\,.
\end{align}
Using properties of the guessing probability, we can write 
\begin{align}
\label{eqnguessmaxentropyapp3b}
p_{\guess}(X|B)&= p_{\guess}(X|M) = \sum_m q_m p_{\guess}(X|M=m)\,,
\end{align}
where $q_m := \Tr (\rho_B \Mbb_m)$ is the probability of outcome $M=m$, and $p_{\guess}(X|M=m)$ denotes the guessing probability for $X$ conditioned on outcome $M=m$. 
 
From \eqref{eqnguessmaxentropyapp2}, and the expression for the max-entropy when conditioning on classical information \cite{Tomamichel2012}, we have 
\begin{align}
\label{eqnguessmaxentropyapp4}
2^{H_{\max}(X|B)} &\leq \sum_m q_m 2^{H_{\max}(X|M=m)}\,, 
\end{align}
where $H_{\max}(X|M=m)$ is the max-entropy of $X$ conditioned on outcome $M=m$. Combining \eqref{eqnguessmaxentropyapp4} with \eqref{eqnguessmaxentropynomem} gives
\begin{align}
2^{H_{\max}(X|B)} -1 &\leq \sum_m q_m \sqrt{(d-1)^2   - (d p_{\guess}(X|M=m)-1)^2 } \\
&\leq  \sqrt{(d-1)^2   - \sum_m q_m \Big(d p_{\guess}(X|M=m)-1\Big)^2 }  \\
\label{eqnguessmaxentropyapp5}
&\leq  \sqrt{(d-1)^2   - \bigg(d \sum_m q_m p_{\guess}(X|M=m)-1\bigg)^2 } \,,
\end{align}
where the second (third) inequality uses the concavity (convexity) of the square root (square) function. Combining \eqref{eqnguessmaxentropyapp5} with \eqref{eqnguessmaxentropyapp3b} proves the desired result.
\end{proof}

\section{Asymmetric couplers}\label{app3}

Theorem~\ref{thm111} treats the $n$-path interferometer in Fig.~\ref{fig1} for the special case where $\fc_2$ is \textit{symmetric}, or in other words, \textit{unbiased} with respect to the $n$ output modes. Here we show that this restriction can be relaxed, and one can derive a WPDR for the general scenario where both $\fc_1$ and $\fc_2$ are possibly asymmetric, namely Eq.~\eqref{eqnasymmetricwpdr1}. Furthermore, we will derive \eqref{eqnasymmetricwpdr1} from the MMEUR; more specifically, from our generic relation in \eqref{eqnMainResult}.

First let us introduce our notation.  Recall that $S$ denotes the photon's spatial degree of freedom inside the interferometer and $E$ is the environment. Let $\rho_{SE}$ denote the bipartite state for $S$ and $E$ at a time immediately after these two systems finish interacting inside the interferometer, and denote this time as $t_1$. As shown in Fig.~\ref{fig1}, after time $t_1$, the photon receives a phase shift $\phi_z$ depending on its path, which we can write as the unitary 
\begin{align}
\label{eqnunitaryphase1}
U_{\phiv} := \sum_z e^{i \phi_z}\dya{z}\,.
\end{align}
Then the photon approaches $\fc_2$, whose action is given by some unitary matrix $U_2$ applied to $\HC_S$. Finally the photon is detected at one of the detectors. Let $\Cbb :=\{\Cbb_c\}_{c=1}^n$ denote the POVM associated with detection at the interferometer output. This POVM gives rise to the random variable $C$, which encodes the information about which detector clicks, as noted in the main text. For notational simplicity, we lump together $U_2$ and $\Cbb$ into a single step, and we define the POVM
\begin{align}
\label{eqnlumpedpovm}
\tilde{\Cbb} :=\{\tilde{\Cbb}_c\}_{c=1}^n\,,\quad\text{with}\quad\tilde{\Cbb}_c := U_2\ad \Cbb_c U_2\,.
\end{align}

As noted in the main text, we will need to modify the definitions of distinguishability and visibility in order to formulate a universally valid relation. Let us first consider distinguishability. Consider a game where one tries to guess which path the photon took given that one knows that the photon was detected at a particular detector, say detector $D_1$ (which corresponds to $C=1$). The guessing probability for this game can be written as $p_{\guess}(Z|E,C=1)$, i.e., the probability for guessing the path $Z$ given the optimal measurement on the environment $E$, and given the outcome $C=1$. We define the \textit{post-selected} distinguishability $\DC_1$ as
\begin{align}
\label{eqnPostselectDistDef}
\DC_1 := \frac{n \hspace{2pt}p_{\guess}(Z|E,C=1)-1}{n-1}\,.
\end{align}

Moving on to visibility, we define
\begin{align}
\label{eqnPostselectVisDef}
\VC_1 := \frac{p_{C=1}^{\max} - p_{C=1}^{\text{dec}}}{(n-1)p_{C=1}^{\text{dec}}}\,,
\end{align}
where $p_{C=1}^{\max} := \max_{\phiv} (p_{C=1} )$ is the probability for detector $D_1$ to click maximized over all phases $\phiv$ applied inside the interferometer. Also, $p_{C=1}^{\text{dec}}$ denotes the probability for detector $D_1$ to click in the case where the state inside the interferometer is fully decohered, i.e., where all of the which-path information has leaked out to the environment and hence the system's density matrix has been diagonalized in the which-path basis. Mathematically, we can write
\begin{align}
\label{eqnPostselectVisDef2}
p_{C=1}^{\max} = \max_{\phiv} \Tr \bigg( U_{\phiv}\hspace{2pt}\rho_S \hspace{2pt} U_{\phiv}\ad \hspace{2pt} \tilde{\Cbb}_1 \bigg)\,,\quad \text{and}\quad p_{C=1}^{\text{dec}} = \Tr \bigg( \sum_z \dya{z}\hspace{2pt}\rho_S \hspace{2pt} \dya{z} \hspace{2pt} \tilde{\Cbb}_1 \bigg) \,.
\end{align}

The intuition behind the formula for $\VC_1$ is the following. It quantifies the degree to which adjusting the phases $\phiv$ can increase the probability $p_{C=1}$ to detect the photon at detector $D_1$, beyond the baseline value $p_{C=1}^{\text{dec}}$ associated with no coherence. So it quantifies the effect of the applied phases on the final detection probability. One may notice that $\VC_1$ looks somewhat similar to \eqref{eqnVisDef1}, which is the most common way of writing $\VC$ in the two-path case. Indeed one has $\VC_1 = \VC$ in the two-path case, which can be seen using the following identity \cite{Coles2014b}
\begin{align}
\label{eqnrelationpdecpmax}
p_{C=1}^{\text{dec}} = \big(p_{C=1}^{\max}+p_{C=1}^{\min}\big)/2\,,
\end{align}
which holds in the special case of $n=2$. However, for $n>2$, $\VC_1$ is generally not equal to the expression in \eqref{eqnVisDef1}.

We remark that $\DC_1$ and $\VC_1$ can be experimentally measured as follows. The experimenter, whom we call Alice, can insert a variable attenuator into each arm of the interferometer in Fig.~\ref{fig1}, such that the attenuator can be set to either allow the photon to pass or to block the photon. To measure $\DC_1$, Alice flips a fair $n$-sided classical coin in order to determine which path inside the interferometer to keep open (to allow the photon to pass), while blocking the other $n-1$ paths. To compute $p_{\guess}(Z|E,C=1)$ Alice employs a second experimenter, named Bob, who tries to guess which path Alice kept open, given that Bob has access to $E$ and given that he post-selects on $C=1$ outcomes. In the case of $\VC_1$, measuring $p_{C=1}^{\max}$ is straightforward, while $p_{C=1}^{\text{dec}}$ can be measured as follows. Alice once again flips a fair $n$-sided classical coin to determine which path inside the interferometer to keep open (while blocking the other $n-1$ paths), and then computes $p_{C=1}^{\text{dec}}$ as the number of detection events at $D_1$ divided by the total number of detection events.

Ref.~\cite{Li2012} proved a WPDR for the two-path MZI for the general case where both fiber couplers (or beam splitters in the case of free-space propagation) are asymmetric. We now state the following theorem, which generalizes the WPDR in Ref.~\cite{Li2012} to the $n$-path MZI.
\begin{theorem}
\label{thm1122}
For the $n$-path MZI in Fig.~\ref{fig1}, where $\fc_1$ and $\fc_2$ are arbitrary (possibly asymmetric) fiber couplers, it holds that
\begin{align}
\label{eqnVisDisTradeoff22241}
\VC_1^2+\DC_1^2 \leq 1\,.
\end{align}
Moreover, \eqref{eqnVisDisTradeoff22241} is a special case of the MMEUR in \eqref{eqnMainResult}. 
\end{theorem}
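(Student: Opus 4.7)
The plan is to mirror the proof of Theorem~\ref{thm111} but on a cleverly reweighted state that bakes in the post-selection on $C=1$. Specifically, introduce the $\Zbb$-diagonal operator $D := \sum_z \sqrt{\mte{z}{\tilde{\Cbb}_1}}\,\dya{z}$ on $\HC_S$ and the renormalized post-selected state
\begin{equation*}
\sigma_{SE} := \frac{(D\otimes I_E)\,\rho_{SE}\,(D\otimes I_E)}{p_{C=1}^{\text{dec}}}\,.
\end{equation*}
The identity $\Tr(D^2\rho_S) = \sum_z \mte{z}{\tilde{\Cbb}_1}\,\rho_{zz} = p_{C=1}^{\text{dec}}$ shows $\sigma_{SE}$ is a valid density operator. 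The MMEUR~\eqref{eqnMainResult}, applied to $\sigma_{SE}$ with $\eno = E$ and $\ent$ trivial, is then the only ingredient needed to obtain~\eqref{eqnVisDisTradeoff22241}, once its two entropy terms are related to $\DC_1$ and $\VC_1$.

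On the particle side, measuring $Z$ on $\sigma_{SE}$ produces $\sigma_{ZE} = \sum_z \tfrac{p_z\,\mte{z}{\tilde{\Cbb}_1}}{p_{C=1}^{\text{dec}}}\,\dya{z}\otimes \rho_E^z$, in which the weights are exactly the posterior $p(z|C=1)$ and the conditional environment state $\rho_E^z$ is the original one (unaffected by the POVM acting only on $S$). Hence $p_{\guess}(Z|E)_\sigma = p_{\guess}(Z|E,C=1)$, which yields the direct analogue of~\eqref{eqnMinMaxRelations1b},
\begin{equation*}
H_{\min}(Z|E)_\sigma = -\log\!\frac{1+(n-1)\DC_1}{n}\,.
\end{equation*}

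On the wave side, I would exploit that a projective output detection makes $\tilde{\Cbb}_1 = \dya{v}$ rank-one with components $v_z := \ip{z}{v}$, so $p_{C=1}(\phiv) = \mte{v}{U_\phiv\rho_S U_\phiv^\dagger}$ depends on $\phiv$ only through the phases of the components of $U_\phiv^\dagger\ket{v}$. Consequently $p_{C=1}^{\max} = \max_{\ket{\tilde v}:\,|\tilde v_z|=|v_z|}\,\mte{\tilde v}{\rho_S}$. On the other hand, for any $\ket{w}$ in a MUB of $\Zbb$, a direct expansion gives $\mte{w}{\sigma_S} = \mte{\hat w}{\rho_S}/(n\,p_{C=1}^{\text{dec}})$, where $\ket{\hat w}$ is the unit vector with $|\ip{z}{\hat w}| = |v_z|$ and arbitrary phases; conversely, any such unit vector can be completed to a MUB of $\Zbb$ via a Fourier-type construction. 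Hence $\max_{\Wbb\in\MC_{\Zbb}} p_{\guess}(W)_\sigma = p_{C=1}^{\max}/(n\,p_{C=1}^{\text{dec}})$, so at the optimizing basis $n\,p_{\guess}(W)_\sigma - 1 = (n-1)\VC_1$. Feeding this into Lemma~\ref{lemma2} delivers the analogue of~\eqref{eqnMinMaxRelations2b},
\begin{equation*}
\min_{\Wbb\in\MC_{\Zbb}} H_{\max}(W)_\sigma \leq \log\!\big(1 + (n-1)\sqrt{1-\VC_1^2}\big)\,.
\end{equation*}

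The proof concludes by plugging both bounds into~\eqref{eqnMainResult} for $\sigma_{SE}$ and rearranging exactly as in the last step of Theorem~\ref{thm111}. The main obstacle is the wave step: one must justify the geometric correspondence between varying the interferometer phases $\phiv$ (which reshuffles the phases of the components of $U_\phiv^\dagger\ket{v}$ while preserving the amplitudes $|v_z|$) and maximizing $\mte{w}{\sigma_S}$ over MUB elements of $\Zbb$ (which, via the $D$-dressing, sweeps exactly the same family of unit vectors). Once this correspondence is established, the asymmetric-coupler WPDR drops out of the MMEUR by essentially the same algebra that yielded~\eqref{eqnVisDisTradeoff22}.
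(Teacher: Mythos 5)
Your argument is correct and follows the same overall strategy as the paper---apply the MMEUR \eqref{eqnMainResult} to a suitably post-selected state and translate its min- and max-entropy terms into $\DC_1$ and $\VC_1$---but you realize the post-selected state differently. The paper first copies the which-path information to a register $S'$ via the isometry $V=\sum_z\ket{z}_{S'}\ot\dya{z}_S\ot\id_E$, then sandwiches with $\tilde{\Cbb}_1$ on $S$ and traces out $S$, yielding $\rhoh_{S'E}$ with matrix elements proportional to $\bra{z'}\tilde{\Cbb}_1\ket{z}\,\rho_{zz'}$; you instead dress $\rho_{SE}$ directly with the $\Zbb$-diagonal operator $D=\sum_z\sqrt{\mte{z}{\tilde{\Cbb}_1}}\,\dya{z}$, producing matrix elements proportional to $\sqrt{(\tilde{\Cbb}_1)_{zz}(\tilde{\Cbb}_1)_{z'z'}}\,\rho_{zz'}$. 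When the output detection is projective, $\tilde{\Cbb}_1=\dya{v}$ is rank one, $(\tilde{\Cbb}_1)_{zz'}$ has modulus $|v_z||v_{z'}|$, and the two states coincide up to conjugation by the $\Zbb$-diagonal phase unitary $\sum_z e^{i\arg v_z}\dya{z}$, which changes neither $H_{\min}(Z|E)$ nor the optimization over $\MC_{\Zbb}$ (that set being closed under such conjugation); your entropy identifications therefore agree with \eqref{eqnVisDisTradeoffProof7} and \eqref{eqnVisDisTradeoffProof8}, and the conclusion follows. The one caveat is that your construction genuinely needs $\tilde{\Cbb}_1$ to be rank one: for a non-projective detection POVM one only has $|(\tilde{\Cbb}_1)_{zz'}|\leq\sqrt{(\tilde{\Cbb}_1)_{zz}(\tilde{\Cbb}_1)_{z'z'}}$, possibly strictly, so your wave-side identity $\mte{w}{\sigma_S}=\mte{\hat w}{\rho_S}/(n\,p_{C=1}^{\text{dec}})$ fails, whereas the paper's computation \eqref{eqnVisDisTradeoffProof11}--\eqref{eqnVisDisTradeoffProof19} never uses rank-one-ness. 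Since Fig.~\ref{fig1} has ideal detectors after the unitary coupler $\fc_2$, the rank-one case is the physically relevant one, so this is a restriction worth stating explicitly rather than a gap. What your version buys is economy: no auxiliary register, and a transparent geometric explanation of why varying $\phiv$ sweeps exactly the same family of unit vectors as varying $\Wbb$ over $\MC_{\Zbb}$.
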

\begin{proof}
The proof simply involves applying \eqref{eqnMainResult} to the appropriate density operator.

Consider an isometry $V$ that produces a copy of the which-path information and stores it in a register $S'$, which we can write as a map $\HC_{SE}\to  \HC_{S'SE}$ given by
\begin{align}
\label{eqnVisDisTradeoffProof1}
V := \sum_z \ket{z}_{S'} \ot\dya{z}_S \ot \id_E \,.
\end{align}
Consider the density operator obtained from applying this isometry to $\rho_{SE}$,
\begin{align}
\label{eqnVisDisTradeoffProof2}
\rhot_{S'SE}:=V\rho_{SE}V\ad \,.
\end{align}
We note that $\rhot_{S'SE}$ is not actually the physical state; rather it is a mathematical construction that we use to prove the desired result. 

Now consider the density operator obtained from post-selecting on events where detector $D_1$ clicks (i.e., where $C=1$). That is, we take the density operator $\rhot_{S'SE}$ and we consider its overlap with the POVM element $\tilde{\Cbb}_1$ associated with the event $C=1$. Applying this post-selection to $\rhot_{S'SE}$ gives the following density operator,
\begin{align}
\label{eqnVisDisTradeoffProof3}
\rhoh_{S'E}:=\frac{\Tr_S [(\id_{S'} \ot \tilde{\Cbb}_1 \ot \id_{E})\rhot_{S'SE} ]  }{\Tr [(\id_{S'} \ot \tilde{\Cbb}_1 \ot \id_{E})\rhot_{S'SE} ]}\,.
\end{align}

We now apply \eqref{eqnMainResult} to the density operator in \eqref{eqnVisDisTradeoffProof3}, setting $E_2$ to a trivial system and $E_1 = E$, giving
\begin{align}
\label{eqnVisDisTradeoffProof4}
H_{\min}(Z_{S'} | E )_{\rhoh}+ \min_{\Wbb\in \MC_{\Zbb}}H_{\max}(W_{S'}  )_{\rhoh} \geq \log n\,,
\end{align}
where the $\rhoh$ subscripts in \eqref{eqnVisDisTradeoffProof4} serve as a reminder that the entropy terms are evaluated for the state in \eqref{eqnVisDisTradeoffProof3}, and the $S'$ subscripts in \eqref{eqnVisDisTradeoffProof4} indicates that the random variables arise from observables on system $S'$. Noting that $Z_{S'}$ is a copy of the which-path information for system $S$, one can see that 
\begin{align}
\label{eqnVisDisTradeoffProof5}
H_{\min}(Z_{S'} | E )_{\rhoh} &= - \log p_{\guess}(Z_{S'} | E )_{\rhoh}  \\
\label{eqnVisDisTradeoffProof6}
&= - \log p_{\guess}(Z_{S} | E, C=1 )_{\rho} \\
\label{eqnVisDisTradeoffProof7}
&=   - \log \left(\frac{1+(n-1)\DC_1 }{n} \right)\,,
\end{align}
where we note that \eqref{eqnVisDisTradeoffProof6} refers to the physical state $\rho_{SE}$. Likewise we can relate the max-entropy term in \eqref{eqnVisDisTradeoffProof4} to $\VC_1$ as follows
\begin{align}
\label{eqnVisDisTradeoffProof8}
\min_{\Wbb\in \MC_{\Zbb}}H_{\max}(W_{S'}  )_{\rhoh} \leq \log \left(1+(n-1)\sqrt{1-\VC_1^2} \right)\,,
\end{align}
where we elaborate on the proof of \eqref{eqnVisDisTradeoffProof8} below. Inserting \eqref{eqnVisDisTradeoffProof7} and \eqref{eqnVisDisTradeoffProof8} into \eqref{eqnVisDisTradeoffProof4} and rearranging gives the desired result \eqref{eqnVisDisTradeoff22241}.

Eq.~\eqref{eqnVisDisTradeoffProof8} is proved as follows. First, using \eqref{eqnguessmaxentropy} gives
\begin{align}
\label{eqnVisDisTradeoffProof9}
\min_{\Wbb\in \MC_{\Zbb}}H_{\max}(W_{S'}  )_{\rhoh} & \leq \log \left(1+(n-1) \sqrt{1- \left(\frac{n \max_{\Wbb\in \MC_{\Zbb}} p_{\guess}(W_{S'}  )_{\rhoh}-1  }{n-1}\right) ^2   } \right)\,.
\end{align}
So it remains only to show that
\begin{align}
\label{eqnVisDisTradeoffProof10}
n \max_{\Wbb\in \MC_{\Zbb}} p_{\guess}(W_{S'}  )_{\rhoh}  = \frac{p_{C=1}^{\max}  }{ p_{C=1}^{\text{dec}}}\,.
\end{align}
The proof of \eqref{eqnVisDisTradeoffProof10} goes as follows,
\begin{align}
\label{eqnVisDisTradeoffProof11}
 \max_{\Wbb\in \MC_{\Zbb}} p_{\guess}(W_{S'}  )_{\rhoh}  & =  \max_{\Wbb\in \MC_{\Zbb}} p_{\guess}(W_{S'} | C=1 )_{\rhot}\\
\label{eqnVisDisTradeoffProof12} & =  \max_{\Wbb\in \MC_{\Zbb}} \max_{w} \Pr(W_{S'} = w | C=1 )_{\rhot}\\
\label{eqnVisDisTradeoffProof13}& =  \max_{\Wbb\in \MC_{\Zbb}} \max_{w} \frac{\Pr(W_{S'} = w \hspace{2pt} , \hspace{2pt} C=1 )_{\rhot} }{\Pr (C=1)_{\rhot}  }\\
\label{eqnVisDisTradeoffProof14}& =  \max_{\Wbb\in \MC_{\Zbb}} \max_{w} \frac{  \Tr (( \dya{\Wbb_w}\ot \tilde{\Cbb}_1) \rhot_{S'S} )}{  \Tr ( \tilde{\Cbb}_1  \rhot_{S} )  }\\
\label{eqnVisDisTradeoffProof15}& =  \frac{1}{p_{C=1}^{\text{dec}}} \max_{\Wbb\in \MC_{\Zbb}} \max_{w} \Tr (( \dya{\Wbb_w}\ot \tilde{\Cbb}_1) \rhot_{S'S} )  \\
\label{eqnVisDisTradeoffProof16}& =  \frac{1}{p_{C=1}^{\text{dec}}} \max_{\{\phi_z\}} \Tr \left(\left( \sum_{z,z'} \frac{e^{i(\phi_z - \phi_{z'})}}{n}\dyad{z}{z'} \ot \tilde{\Cbb}_1 \right) \rhot_{S'S} \right)  \\
\label{eqnVisDisTradeoffProof17}& =  \frac{1}{n p_{C=1}^{\text{dec}}} \max_{\{\phi_z\}} \Tr \left(\left( \sum_{z,z'} e^{i(\phi_z - \phi_{z'})} \dyad{z}{z'} \ot \tilde{\Cbb}_1 \right) \left( \sum_{z'',z'''} \dyad{z''}{z'''} \ot \dya{z''}\rho_S\dya{z'''}\right) \right)  \\
\label{eqnVisDisTradeoffProof18}& =  \frac{1}{n p_{C=1}^{\text{dec}}} \max_{\{\phi_z\}} \Tr \left( \sum_{z,z'} e^{i(\phi_z - \phi_{z'})} \tilde{\Cbb}_1  \dya{z'}\rho_S\dya{z} \right)  \\
\label{eqnVisDisTradeoffProof19}& =  \frac{1}{n p_{C=1}^{\text{dec}}} \max_{\{\phi_z\}} \Tr \left(  \tilde{\Cbb}_1  U_{\phiv}\rho_S U_{\phiv}\ad \right)  \\
& =  \frac{p_{C=1}^{\max} }{n p_{C=1}^{\text{dec}}}  \,.
\end{align}
Here, $\Pr$ denotes ``probability'', and \eqref{eqnVisDisTradeoffProof13} uses the fact that the joint probability for two events $X=x$ and $Y=y$ is given by $\Pr (X=x,Y=y) = \Pr (X=x | Y=y) \Pr(Y=y)$. Also, \eqref{eqnVisDisTradeoffProof16} uses the fact that any state $\ket{\psi}$ that is unbiased to the $\Zbb$ basis can be written as $\ket{\psi}=\sum_z (e^{i\phi_z}/\sqrt{n})\ket{z}$ for some set of phases $\{\phi_z\}$. This concludes the proof.
\end{proof}

\section{Quantum erasure}\label{app2}

Here we give a proof of \eqref{eqnqerase123}.  As a reminder, we restate this result as follows. Recall that a quantum erasure experiment involves performing a measurement $\Ybb = \{\Ybb_y\}$ on the environment $E$ and using the measurement outcome to sort the data into sub-ensembles. Let
\begin{equation}
\label{eqndefdyvyapp1}
\DC(\Ybb_y):= \frac{n \hspace{2pt}p_{\guess}(Z| Y=y) -1}{n-1}\quad \text{and}\quad \VC(\Ybb_y):= \frac{n \max_{\phiv}p_{\guess}(C| Y=y) -1}{n-1}
\end{equation}
denote the path distinguishability and fringe visibility, respectively, for the $y$-th sub-ensemble, i.e., associated with outcome $y$ of the $\Ybb$ measurement. Averaging over all $y$ gives the quantities
\begin{align}
\label{eqndefdyvyapp2}
\DC(\Ybb):= \sum_y p_y \DC(\Ybb_y)\quad \text{and}\quad\VC(\Ybb):= \sum_y p_y \VC(\Ybb_y)
\end{align}
where $p_y$ is the probability of outcome $y$. Then we have the following result, which is a generalization of Theorem~\ref{thm111}. (We remark that, while the following theorem is stated for the case where $\fc_2$ is symmetric, this assumption can be relaxed using the approach in Appendix~\ref{app3}, where the definitions of visibility and distinguishability are slightly modified.)
\begin{theorem}
\label{thmqerase1}
For the $n$-path MZI in Fig.~\ref{fig1}, where $\fc_1$ is arbitrary while $\fc_2$ is symmetric, it holds that
\begin{align}
\label{eqnVisDisTradeoff2224112}
\VC(\Ybb)^2+\DC(\Ybb)^2 \leq 1\,.
\end{align}
\end{theorem}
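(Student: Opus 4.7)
The plan is to reduce the theorem to Theorem~\ref{thm111} applied at the sub-ensemble level, and then to average the resulting pointwise inequalities by convexity. Upon performing the POVM $\Ybb$ on $E$ and observing outcome $y$ (with probability $p_y$), the system $S$ is left in the sub-ensemble state $\rho_S^y$ while the classical register $Y$ takes the value $y$; conditional on $y$ there is no remaining quantum side information. I would therefore apply the generic WPDR \eqref{eqnMainResult} to $\rho_S^y$ with trivial $E_1$ and $E_2$, which is exactly the MMEUR-based derivation of Theorem~\ref{thm111}. Equivalently, this can be viewed as conditioning the classical-side-information relation \eqref{eqnQuEr111122} on the single value $Y=y$.

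Using \eqref{eqnMinMaxRelations1b}, \eqref{eqnMinMaxRelations2b}, and the identity $\max_{\Wbb \in \MC_\Zbb} p_{\guess}(W)_{\rho_S^y} = \max_{\phiv} p_{\guess}(C)_{\rho_S^y}$ from \eqref{eqnpguessWandC} applied to the sub-ensemble, this yields the pointwise tradeoff $\VC(\Ybb_y)^2 + \DC(\Ybb_y)^2 \leq 1$ for every outcome $y$. To conclude, I would run a convexity argument: view $\vec{u}_y := (\DC(\Ybb_y), \VC(\Ybb_y))$ as a point in the closed unit disk of $\Rbb^2$. Since the unit disk is convex, the probabilistic mixture $\sum_y p_y \vec{u}_y$ also lies in the disk, and by \eqref{eqndefdyvyapp2} this mixture coincides with $(\DC(\Ybb), \VC(\Ybb))$. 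This gives $\VC(\Ybb)^2 + \DC(\Ybb)^2 \leq 1$; equivalently, it is Minkowski's inequality (Jensen applied to the convex $\ell^2$-norm) for the expectation $\Ebb_y[\vec{u}_y]$.

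The subtle point --- which I expect to be the only real obstacle --- is why one cannot simply invoke \eqref{eqnQuEr111122} directly without first conditioning on $y$. The minimization $\min_{\Wbb \in \MC_\Zbb}$ in \eqref{eqnQuEr111122} is over a single basis $\Wbb$ chosen before $y$ is revealed, whereas $\VC(\Ybb)$ as defined in \eqref{eqndefdyvyapp2} builds in an outcome-dependent optimization of the phases $\phiv_y$ (and hence of the measurement basis). Trying to push the fixed-$\Wbb$ MMEUR together with the classical decomposition $2^{H_{\max}(W|Y)}=\sum_y p_y 2^{H_{\max}(W)_{\rho_S^y}}$ yields an inequality in the wrong direction for our purpose. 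Sub-ensemble reduction dissolves this tension: at fixed $y$ the sub-ensemble-level $\max_{\Wbb\in\MC_\Zbb} p_{\guess}(W)_{\rho_S^y}$ is free to adapt, after which the remaining averaging step is purely classical and raises no further issues.
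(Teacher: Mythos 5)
Your argument is correct, but it follows a genuinely different route from the paper's. The paper does not condition on $y$ and average at the end; instead it applies the generic relation \eqref{eqnMainResult} \emph{once}, to a globally constructed state $\rho^{(4)}_{S\register\register'}$ in which the measurement outcome is copied into two classical registers and an outcome-controlled phase unitary $U_{\enh}=\sum_y U_{\phiv_y}\ot\dya{y}_\register$ is applied [see \eqref{eqnrho4qerase}]. This yields the classical-side-information entropic relation \eqref{eqnQuEr111122}, after which $H_{\min}(Z|\register)$ is identified with $\DC(\Ybb)$ and $H_{\max}(W|\register)$ is bounded via the conditional form of \eqref{eqnguessmaxentropy} (Lemma~\ref{lemma2}, whose proof contains the concavity/convexity steps that do the averaging over $y$). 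Your route instead invokes only the unconditional bound \eqref{eqnguessmaxentropynomem} (Lemma~\ref{lemma1}) on each sub-ensemble $\rho^y_S$ --- giving the pointwise inequality $\VC(\Ybb_y)^2+\DC(\Ybb_y)^2\leq 1$, which is just Theorem~\ref{thm111} with trivial environment --- and then performs the averaging explicitly by convexity of the unit disk. The two averaging steps are morally the same computation, and your observation that a single fixed $\Wbb$ chosen before $y$ is revealed would not suffice is precisely the issue the paper's controlled unitary $U_{\enh}$ is designed to resolve. What your approach buys is economy: it needs neither the conditional max-entropy machinery (data processing, decomposition over a classical register) nor the two-register construction. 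What it gives up is the entropic statement \eqref{eqnQuEr111122} itself, which the paper advertises in the main text as the structural content of the result (the claim that quantum erasure is literally a special case of the EUR \eqref{eqnMainResult} with $E_1=Y$, $E_2=Y'$); your proof establishes the duality inequality \eqref{eqnVisDisTradeoff2224112} without exhibiting that intermediate relation.
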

\begin{proof}
A quantum erasure experiment can be separated into three sequential steps, each of which is a CPTP map:
\begin{enumerate}
  \item The system $S$ (i.e., the photon's spatial degree of freedom) interacts with an environment $\En$, via CPTP map $\EC_{\text{int}}$.
  \item System $\En$ is measured and the outcome is stored in a register $\register$, via CPTP map $\EC_{\text{meas}}$.
  \item The experimenter uses this measurement result to enhance the visibility on system $S$ (i.e., to sort the data point into a sub-ensemble and determine the optimal phase shift for that sub-ensemble). This is modelled as a CPTP map $\EC_{\enh}$ that couples $\register$ to $S$.
\end{enumerate}
The overall CPTP map $\EC$ is a composition of these three maps:
\begin{equation}
\EC = \EC_{\enh} \circ \EC_{\text{meas}} \circ \EC_{\text{int}}. 
\end{equation}
Suppose the system starts in the state $\rho_{S}^{(1)}$, just after it enters the interferometer. The interaction with $\En$ results in the state $\rho_{S\En}^{(2)} := \EC_{\text{int}}(\rho_{S}^{(1)})$. Next, $\EC_{\text{meas}}$ performs the measurement $\Ybb = \{\Ybb_{\outcome}\}$ on system $\En$ and stores the outcome in two (redundant) registers $\register$ and $\register '$, resulting in the state
\begin{equation}
\rho_{S\register \register '}^{(3)}:= \EC_{\text{meas}}(\rho_{S\En}^{(2)}) = \sum_{\outcome} p_{\outcome} \rho_{S,y}^{(2)} \ot \dya{\outcome}_\register \ot \dya{\outcome}_{\register '}\,.
\end{equation}
Here, state $\ket{\outcome}$ corresponds to obtaining outcome $\outcome$ from measuring $\Ybb$, with the set $\{\ket{\outcome} \}$ forming an orthonormal basis on the register Hilbert space, and outcome $\outcome$ leaves system $S$ in the conditional state 
\begin{align}
\rho_{S, \outcome}^{(2)} = \frac{1}{p_{\outcome}}\Tr_{\En} \big[ (\id_S \ot \Ybb_{\outcome}) \rho_{S \En}^{(2)} \big],\quad\text{with}\quad p_{\outcome} =\Tr \big[ (\id_S \ot \Ybb_{\outcome})  \rho_{S \En}^{(2)} \big] .
\end{align}
The point of having two registers ($\register$ and $\register '$) is that one register will act as system $E_1$ - to be used to enhance the distinguishability - while the other will act as system $E_2$ - to be used to enhance the visibility.

For each outcome $\outcome$, we wish to obtain the full visibility that is available. So we allow the experimenter to choose the optimal set of phase shifts $\phiv_{\outcome}$ for each $\outcome$, i.e., let $\phiv_{\outcome}$ achieve the optimization in $\VC(\Ybb_{\outcome})$ as defined in \eqref{eqndefdyvyapp1}. In other words, we allow the experimenter, given the outcome $\outcome$, to rotate the system $S$ via a unitary $U_{\phiv_{\outcome}}$ that has the form given in \eqref{eqnunitaryphase1}. Accounting for all possible values of $\outcome$, the overall unitary is a controlled unitary $U_{\enh} := \sum_{\outcome} U_{\phiv_{\outcome}} \ot \dya{\outcome}_{\register}$ where $\register$ acts as the control system. Hence the action of the map that enhances the visibility is
\begin{align}
\label{eqnrho4qerase}
\rho_{\sys \register \register'}^{(4)}&:= \EC_{\enh}(\rho_{\sys \register \register '}^{(3)}) = U_{\enh} \rho_{\sys \register \register '}^{(3)} U_{\enh}\ad = \sum_{\outcome} p_{\outcome} \rhot_{\sys , \outcome}^{(2)}  \ot \dya{\outcome}_{\register} \ot \dya{\outcome}_{\register '}\,,\quad \text{with}\quad \rhot_{\sys , \outcome}^{(2)}:= U_{\phiv_{\outcome}}  \rho_{\sys ,\outcome}^{(2)} \Big(U_{\phiv_{\outcome}}  \Big)\ad \,.
\end{align}
Applying \eqref{eqnMainResult} to the state $\rho_{\sys \register \register'}^{(4)}$, choosing $E_1=\register$ and $E_2 =\register ' $, and noting that $\register '$ and $\register$ are identical copies and hence can be interchanged, gives
\begin{equation}
\label{eqnQuEr1111}
H_{\min}(Z|\register )_{\rho^{(4)}}+\min_{W\in \MC_{\Zbb}}H_{\max}(W|\register )_{\rho^{(4)}} \geq \log n\,,
\end{equation}
where the subscript $\rho^{(4)}$ emphasizes that the entropy terms are evaluated for the state $\rho_{\sys \register \register '}^{(4)}$. Equation~\eqref{eqnQuEr1111} is our quantum erasure relation, corresponding to Eq.~\eqref{eqnQuEr111122} in the main text.

We note that \eqref{eqnQuEr1111} can be rewritten as follows. Using the definition of $\DC(\Ybb)$ in \eqref{eqndefdyvyapp2}, we have
\begin{align}
H_{\min}(Z|\register )_{\rho^{(4)}} &= - \log p_{\guess}(Z|\register ) \\
&= - \log \left( \sum_{\outcome} p_y p_{\guess}(Z|\register = \outcome )\right) \\
\label{eqnhminerasure}&= \log n - \log ( (n-1)\DC(\Ybb)+1)\,.
\end{align}
Likewise, the definition of $\VC(\Ybb) $ gives
\begin{align}
\label{eqnhmaxerasure1}\min_{\Wbb\in \MC_{\Zbb}}H_{\max}(W|\register )_{\rho^{(4)}} &\leq H_{\max}( C | \register )_{\rho^{(4)}}\\
\label{eqnhmaxerasure2}&\leq \log \left(1+\sqrt{(n-1)^2 - (n \hspace{2pt}p_{\guess}( C | \register )_{\rho^{(4)}}-1)^2   }\right)\\
\label{eqnhmaxerasure3}&= \log \left(1+(n-1)\sqrt{1 -  \VC(\Ybb )^2   }\right)\,.
\end{align}
Here, \eqref{eqnhmaxerasure1} uses the fact that $\fc_2$ is symmetric, and hence the POVM $\tilde{\Cbb}$ defined in \eqref{eqnlumpedpovm} corresponds to an orthonormal basis that is mutually unbiased to $\Zbb$, i.e., $\tilde{\Cbb} \in \MC_{\Zbb}$. Equation~\eqref{eqnhmaxerasure2} uses \eqref{eqnguessmaxentropy}. Equation~\eqref{eqnhmaxerasure3} uses the fact that the state $\rho_{\sys \register \register'}^{(4)}$ in \eqref{eqnrho4qerase} is constructed such that $p_{\guess}( C |Y=y)$ is optimal, i.e.,
\begin{align}
\VC(\Ybb) &= \sum_y p_y \VC(\Ybb_y) \\
&=\sum_y p_y \left(\frac{n \hspace{2pt}\max_{\phiv}p_{\guess}(C|Y=y)-1}{n-1}\right)\\
&=\sum_y p_y \left(\frac{n \hspace{2pt}p_{\guess}( C |Y=y)_{\rho^{(4)}}-1}{n-1}\right)\,.
\end{align}
Inserting the expressions in \eqref{eqnhminerasure} and \eqref{eqnhmaxerasure3} into \eqref{eqnQuEr1111} and rearranging gives the desired result \eqref{eqnVisDisTradeoff2224112}.
\end{proof}

\end{document}